\documentclass{llncs}

\usepackage{amsmath, amssymb}
\setcounter{tocdepth}{3}
\usepackage{graphicx}
\usepackage{url}
\usepackage{stmaryrd}
\usepackage{bussproofs}

\usepackage{proof}

\newcommand{\ignore}[1]{}
\newcommand{\ital}[1]{{\em #1}}

\newcommand{\ra}{\rightarrow}


\newcommand{\lprolog}{$\lambda$Prolog}
\newcommand{\hhf}{$hohh$}

\newcommand{\dtlc}{dependently typed $\lambda$-calculus}

\newcommand{\ruvappt}{{\scriptsize APP$_\text{t}$}}
\newcommand{\ruvpit}{{\scriptsize PI$_\text{t}$}}
\newcommand{\ruvinito}{{\scriptsize INIT$_\text{o}$}}
\newcommand{\ruvappo}{{\scriptsize APP$_\text{o}$}}
\newcommand{\ruvabso}{{\scriptsize ABS$_\text{o}$}}

\renewcommand{\vec}[1]{\overrightarrow{#1}}

\newcommand{\subst}[1]{[#1]}

\newcommand{\iprove}[2]{\sequent{#1}{#2}}
\newcommand{\lfprove}[2]{#1 \, \vdash \, #2}
\newcommand{\hastype}[2]{hastype \ #1 \ #2}

\newcommand{\encTerm}[1]{\langle #1 \rangle}

\newcommand{\encExtP}[2]{\llbracket #1 \rrbracket^{+}_{#2}}
\newcommand{\encExtN}[1]{\llbracket #1 \rrbracket^{-}}
\newcommand{\termUV}[2]{#1 \sqsubset_o #2}
\newcommand{\formulaUV}[2]{#1 \sqsubset_t #2}

\newcommand{\sequent}[2]{#1 \longrightarrow #2}

\newcommand{\typedlambda}[3]{\lambda #1 \mbox{:} #2 . #3}

\newcommand{\typedpi}[3]{\Pi #1 \mbox{:} #2 . #3}
\newcommand{\typedpis}[3]{\Pi \vec{#1 \mbox{:} #2} . #3}
\newcommand{\app}{\ }
\newcommand{\emptyctx}{\cdot}
\newcommand{\oftype}[2]{#1 : #2}
\newcommand{\ctx}{\mbox{\sl ctx}}
\newcommand{\kind}{\mbox{\sl kind}}
\newcommand{\type}{\mbox{\sl Type}}

\newcommand{\nullctx}{\mbox{\sl null-ctx}}
\newcommand{\kindctx}{\mbox{\sl kind-ctx}}
\newcommand{\typectx}{\mbox{\sl type-ctx}}
\newcommand{\typekind}{\mbox{\sl type-kind}}
\newcommand{\pikind}{\mbox{\sl pi-kind}}
\newcommand{\varfam}{\mbox{\sl var-fam}}
\newcommand{\varobj}{\mbox{\sl var-obj}}
\newcommand{\pifam}{\mbox{\sl pi-fam}}
\newcommand{\absfam}{\mbox{\sl abs-fam}}
\newcommand{\appfam}{\mbox{\sl app-fam}}
\newcommand{\appobj}{\mbox{\sl app-obj}}
\newcommand{\absobj}{\mbox{\sl abs-obj}}

\newcommand{\ie}{{\it i.e.}}

\newtheorem{definition'}{Definition}
\newtheorem{remark'}{Remark}
\newtheorem{theorem'}{Theorem}
\newtheorem{corollary'}{Corollary}
\newtheorem{lemma'}{Lemma}
\newtheorem{proposition'}{Proposition}

\sloppy
\setlength{\parskip}{0pt}

\begin{document}

\long\def\comment#1{}

\title{Redundancies in Dependently Typed Lambda Calculi
 and Their Relevance to Proof Search}

\author{Zachary Snow, David Baelde and Gopalan Nadathur}
\institute{
Department of Computer Science and Engineering                    \\
University of Minnesota                                           \\
4-192 EE/CS Building, 200 Union Street SE, Minneapolis, MN 55455  \\
snow@cs.umn.edu, david.baelde@gmail.com, gopalan@cs.umn.edu
}

\maketitle

\begin{abstract}
Dependently typed $\lambda$-calculi such as the Logical Framework (LF)
are capable of representing relationships between terms through
types. By exploiting the ``formulas-as-types'' notion, such calculi
can also encode the correspondence between formulas and their proofs
in typing judgments. As such, these calculi provide a natural yet
powerful means for specifying varied formal systems. Such
specifications can be transformed into a more direct form that uses
predicate formulas over simply typed $\lambda$-terms and that thereby
provides the basis for their animation using conventional logic
programming techniques. However, a naive use of this idea is fraught
with inefficiencies arising from the fact that dependently typed
expressions typically contain much redundant typing information. We
investigate syntactic criteria for recognizing and, hence, eliminating
such redundancies. In particular, we identify a property of bound
variables in LF types called {\it rigidity} and formally show that
checking that instantiations of such variables adhere to typing
restrictions is unnecessary for the purpose of ensuring that the
overall expression is well-formed. We show how to exploit this
property in a translation based approach to executing specifications
in the Twelf language. Recognizing redundancy is also relevant to
devising compact representations of dependently typed expressions. We
highlight this aspect of our work and discuss its connection with
other approaches proposed in this context.

\end{abstract}

\section{Introduction}
\label{sec:introduction}

There is a significant, and growing interest in mechanisms for
specifying, prototyping and reasoning about formal systems that are
described by syntax-directed rules.  Dependently typed
$\lambda$-calculi such as the Logical Framework (LF)
\cite{harper93jacm} provide many conveniences from a specification
perspective in this context: such calculi facilitate the use of a
higher-order approach to describing the syntax of formal objects, they
allow relationships between terms to be captured in an elegant way
through type dependencies, and they allow proof-checking to be
realized through type-checking. Such calculi can also be given a
logic programming interpretation by exploiting the isomorphism between
formulas and types \cite{howard80}. The Twelf 
system \cite{pfenning99cade}, based on this idea, 
has been used successfully in specifying, prototyping, and reasoning about
varied formal systems.

While a system like LF provides interesting and useful ways to factor
typing properties of terms and relationships between terms, this 
ability is not essential to its specification applications. In
particular, it is possible also to use predicate based 
descriptions over simply typed $\lambda$-terms to similar effect. In
fact, it is possible to formally present a translation of dependently
typed specifications into a predicate logic form that uses the
properties of being a type and of being a term of a certain type
\cite{felty89phd,felty90cade}. Such a translation lends itself to the
possibility of using an implementation of a conventional logic
programming language like \lprolog~
\cite{nadathur88iclp,teyjus.website} to animate specifications written
in LF \cite{snow10ppdp}. Moreover, if the translation preserves the structure of the
original specification, it would be possible to view the dependently
typed descriptions as meta-programs and to reason about them using
techniques for reasoning about the generated predicate based specifications.

Unfortunately, the reality with respect to a straightforward
translation does not quite fit this promise. The essential source of
the problem is the fact that expressions in a dependently typed
language typically contain much redundant type information. When such
information is preserved in a translation, it leads to a predicate
logic program that is not efficient to execute. The resulting
extraneous typing constraints destroy also the transparency of
the correspondence, and thereby interfering with the meta-program view
and the reasoning possibilities. 

These observations provide the motivation for the work we describe
here: identifying redundancies in LF expressions. More specifically,
we want to elucidate syntactic criteria for determining such
unnecessary information that can, for instance, be exploited in a
translation based approach to implementing LF specifications. We
describe a property of bound variables in LF types called {\it
  rigidity} and show in a formal sense that knowledge of the specific
instantiations of such variables is unnecessary from the perspective
of checking if the expression is well-formed. While our observations
are driven by a particular application, they also have a relevance in
other contexts such as that of devising compact representation of
proofs \cite{reed08tcs}. We discuss these connections in the paper. 

In the next section we describe the \dtlc\ and LF.  We then introduce
a translation from LF to a predicate logic that preserves provability, and
describe how redundancies in LF derivations can impact the performance of the
generated logic program.  In Section~\ref{sec:redundancy} we develop a technique
for identifying and eliminating such redundancies.  Then in Section~\ref{sec:optimization}
we show how it can be applied to improve the translation, and identify some
important practical extensions to the translation.  We conclude the paper
with a discussion of possible future directions for this work.  

\section{The Edinburgh Logical Framework}
\label{sec:lf}

The Edinburgh Logical Framework~\cite{harper93jacm} (LF)
is a dependently typed $\lambda$-calculus with three categories of 
expressions:
{\em kinds}, {\em types} or {\em type families} that are
classified by kinds and {\em objects} or {\em terms} that are
classified by types. We assume two denumerable sets of 
variables, one for objects and the other for types. We use $x$
and $y$ to denote object variables, $u$ and $v$ to denote type
variables and $w$ to denote either. Letting $K$ range over kinds, $A$
and $B$ over types, and $M$ and $N$ over object terms, the syntax of
LF expressions is given by the following rules:
\begin{center}
\begin{tabular}{ccl}
    $K$ & $:=$ & $\type\ |\ \typedpi{x}{A}{K}$ \\ 
    $A$ & $:=$ &
        $u\ |\ \typedpi{x}{A}{B}\ |\ \typedlambda{x}{A}{B}\ |\ A~M $ \\ 
    $M$ & $:=$& $x\ |\ \typedlambda{x}{A}{M}\ |\ M~N $ \\
\end{tabular}
\end{center}
Expressions of any of these kinds will be denoted by $P$ and $Q$. Here,
$\Pi$ and $\lambda$ are operators that associate a type with a 
variable and bind its free occurrences over the expression
after the period.  We write $P[N_1/x_1,\ldots,N_n/x_n]$ to denote a simultaneous
substitution with renaming to avoid variable capture. We write $A
\rightarrow P$ for $\typedpi{x}{A}{P}$ when $x$ does not appear free
in $P$, and abbreviate $\typedpi{x_1}{A_1}{\ldots\typedpi{x_n}{A_n}{P}}$
by $\typedpis{x}{A}{P}$.

\begin{figure}
\begin{center}
   \AxiomC{}
   \RightLabel{\nullctx}
   \UnaryInfC{$\lfprove{}{\emptyctx\ \ctx}$}
   \DisplayProof

\medskip

   \AxiomC{$\lfprove{\Gamma}{K\ \kind} \quad
     \lfprove{}{\Gamma\ \ctx}\quad u \notin dom(\Gamma)$} 
   \RightLabel{\kindctx}
   \UnaryInfC{$\lfprove{}{\Gamma, u : K\ \ctx}$}
   \DisplayProof

\medskip

   \AxiomC{$\lfprove{\Gamma}{\oftype{A}{\type}} \quad
     \lfprove{}{\Gamma\ \ctx}\quad x \notin dom(\Gamma)$}
   \RightLabel{\typectx}
   \UnaryInfC{$\lfprove{}{\Gamma, x : A\ \ctx}$}
   \DisplayProof

\medskip

\begin{tabular}{cc}
   \AxiomC{$\lfprove{}{\Gamma\ \ctx}$}
   \RightLabel{\typekind}
   \UnaryInfC{$\lfprove{\Gamma}{\type\ \kind}$}
   \DisplayProof

&
   \AxiomC{$\lfprove{\Gamma}{\oftype{A}{\type}} \quad
     \lfprove{\Gamma,\oftype{x}{A}}{K\ \kind}$}
   \RightLabel{\pikind}
   \UnaryInfC{$\lfprove{\Gamma}{\typedpi{x}{A}{K}\ \kind}$}
   \DisplayProof
\end{tabular}

\medskip
\begin{tabular}{cc}
   \AxiomC{$\lfprove{}{\Gamma\ \ctx} \quad \oftype{u}{K} \in \Gamma$}
   \RightLabel{\varfam}
   \UnaryInfC{$\lfprove{\Gamma}{\oftype{u}{K^\beta}}$}
   \DisplayProof

&
   \AxiomC{$\lfprove{}{\Gamma\ \ctx} \quad \oftype{x}{A} \in \Gamma$}
   \RightLabel{\varobj}
   \UnaryInfC{$\lfprove{\Gamma}{\oftype{x}{A^\beta}}$}
   \DisplayProof
\end{tabular}

\medskip

   \AxiomC{$\lfprove{\Gamma}{\oftype{A}{\type}} \quad \lfprove{\Gamma,
       \oftype{x}{A}}{\oftype{B}{\type}}$}
   \RightLabel{\pifam}
   \UnaryInfC{$\lfprove{\Gamma}{\oftype{(\typedpi{x}{A}{B})}{\type}}$}
   \DisplayProof

\medskip
\begin{tabular}{cc}
   \AxiomC{$\lfprove{\Gamma}{\oftype{A}{\type}} \quad
     \lfprove{\Gamma,\oftype{x}{A}}{\oftype{B}{K}}$}
   \RightLabel{\absfam}
   \UnaryInfC{$\lfprove{\Gamma}{\oftype{(\typedlambda{x}{A}{B})}{(\typedpi{x}{A^\beta}{K})}}$}
   \DisplayProof

&

   \AxiomC{$\lfprove{\Gamma}{\oftype{A}{\typedpi{x}{B}{K}}} \quad
     \lfprove{\Gamma}{\oftype{M}{B}}$}
   \RightLabel{\appfam}
   \UnaryInfC{$\lfprove{\Gamma}{\oftype{(A\app M)}{(K[M/x])^\beta}}$}
   \DisplayProof
\end{tabular}

\medskip
\begin{tabular}{cc}
   \AxiomC{$\lfprove{\Gamma}{\oftype{A}{\type}} \quad
     \lfprove{\Gamma,\oftype{x}{A}}{\oftype{M}{B}}$}
   \RightLabel{\absobj}
   \UnaryInfC{$\lfprove{\Gamma}{\oftype{(\typedlambda{x}{A}{M})}{(\typedpi{x}{A^\beta}{B})}}$}
   \DisplayProof

&

   \AxiomC{$\lfprove{\Gamma}{\oftype{M}{\typedpi{x}{A}{B}}} \quad
     \lfprove{\Gamma}{\oftype{N}{A}}$}
   \RightLabel{\appobj}
   \UnaryInfC{$\lfprove{\Gamma}{\oftype{(M\app N)}{(B[N/x])^\beta}}$}
   \DisplayProof
\end{tabular}
\end{center}
  \caption{Rules for Inferring LF Assertions}
  \label{fig:lf-rules}
\end{figure}

The type correctness of LF expressions is assessed relative to
contexts that are finite collections of assignments of types and kinds
to variables (we use $\cdot$ to denote the empty context).
LF deals with assertions of the following four forms:
\begin{center}
$\lfprove{}{\Gamma\ \ctx} \qquad \lfprove{\Gamma}{K \ \kind} \qquad
\lfprove{\Gamma}{\oftype{A}{K}} \qquad \lfprove{\Gamma}{\oftype{M}{A}}$ 
\end{center}
The first assertion signifies that $\Gamma$ is a
well-formed context. The remaining assertions mean
respectively that, relative to a (well-formed) context $\Gamma$, $K$
is a well-formed kind, $A$ is a well-formed type of kind $K$ and $M$
is a well-formed object of type $A$.
Figure~\ref{fig:lf-rules} presents the rules for deriving such
assertions. The inference rules
allow for the derivation of an assertion of the form
$\lfprove{\Gamma}{\oftype{M}{A}}$ only when $A$ is in normal form. To
verify such an assertion when $A$ is not in normal form, we first
derive $\lfprove{\Gamma}{\oftype{A}{\type}}$ and then verify
$\lfprove{\Gamma}{\oftype{M}{A^\beta}}$. A similar observation applies
to $\lfprove{\Gamma}{\oftype{A}{K}}$. 

Well-typed LF expressions admit a $\beta\eta$-long form.
Types of $\beta\eta$-long form $(u \ M_1 \ \ldots \ M_n)$
are called \emph{base types}.
In the following, we shall only consider LF derivations whose end
assertion only contains expressions in $\beta\eta$-long form.
Notice that every expression in the entire derivation must then
also be in $\beta$-normal form. This in turn means that in judgments of
the forms $\oftype{(\typedlambda{x}{A}{B})}{(\typedpi{x}{A'}{K})}$ and 
$\oftype{(\typedlambda{x}{A}{M})}{(\typedpi{x}{A'}{B})}$ it must be
the case that $A$ and $A'$ are identical, and that normalization need
not be considered in the use of the \varfam\ and \varobj\ rules.
Additionally, we shall modify inference rules so that all expressions
in the entire derivation remain in $\beta\eta$-long form
| the resulting system is referred to as \emph{canonical} LF.
For objects, \appobj\ and \varobj\ are replaced by the following
big-step application rule (which we refer to as \emph{backchaining}
due to its logic programming interpretation):
\[ \infer{\lfprove{\Gamma}{\oftype{M}{A\subst{N_1/x_1\ldots N_n/x_n}}}}{
      (\oftype{y}{\typedpis{x}{B}{A}})\in\Gamma &
      \lfprove{\Gamma}{\oftype{N_1}{B_1}} & \ldots &
      \lfprove{\Gamma}{\oftype{N_n}{B_n\subst{N_1/x_1\ldots N_{n-1}/x_{n-1}}}}
   }
\]
The rules for type families are changed in the same way.

\bigskip

The logic programming interpretation of LF is based on viewing 
types as formulas. More specifically, a specification or program in
this setting is given by a context. This starting context, also called
a {\em signature}, essentially describes the vocabulary for
constructing types and asserts the existence of particular inhabitants
for some of these types. Against this backdrop, questions can be asked
about the existence of inhabitants for certain other types. Formally,
this amounts to asking if an assertion of the form
$\lfprove{\Gamma}{\oftype{M}{A}}$ has a derivation. However, the
object $M$ is left unspecified | it is to be extracted
from a successful derivation. Thus, the search for a derivation of 
the assertion is driven by the structure of $A$ and the types
available from the context.  

A concrete illustration of the paradigm is useful for later
discussions.
\ignore{\footnote{The example of appending lists has been chosen
  here for its conciseness and because it allows for an easy connection
  with more traditional forms of logic programming. The primary
  application domain of Twelf is in specifying (and reasoning about)
  formal systems such as evaluators and interpreters for languages,
  type assignment calculi and proof systems.
  We do not intend to present those aspects here,
  as well as many other features of Twelf.}
}
Consider a signature or  context $\Gamma$ comprising the 
following assignments in sequence: 
\begin{tabbing}
\qquad\=\qquad\quad\=\qquad\=\kill
\>$\oftype{nat}{\type}$, \quad $\oftype{z}{nat}$, \quad $\oftype{s}{nat \rightarrow nat}$, \\
\>$\oftype{list}{\type}$, \quad $\oftype{nil}{list}$, \quad $\oftype{cons}{nat \rightarrow list \rightarrow list}$, \\ 
\>$\oftype{append}{list \rightarrow list \rightarrow list \rightarrow \type}$, \\
\>$\oftype{appNil}{\typedpi{K}{list}{append~nil~K~K}}$, \\
\>$\oftype{appCons}{\typedpi{X}{nat}{\typedpi{L}{list}{\typedpi{K}{list}{\typedpi{M}{list}{}}}}}$\\
\>\>$(append~L~K~M) \rightarrow$\\
\>\>\>$(append~(cons~X~L)~K~(cons~X~M))$
\end{tabbing}
We can ask if there is some term $M$ such that the judgment
\begin{center}
$\lfprove{\Gamma}{\oftype{M}{append~(cons~z~nil)~(cons~(s~z)~nil)~(cons~z~(cons~(s~z)~nil))}}$
\end{center}
is derivable.

Furthermore, as Twelf allows for instantiatable meta-variables in the type $A$,
we can ask \ital{which} list results from appending $(cons~z~nil)$ and $(cons~(s~z)~nil)$;
in the following, $L$ is such a variable: \begin{center}
$\lfprove{\Gamma}{\oftype{M}{append~(cons~z~nil)~(cons~(s~z)~nil)~L}}$.
\end{center}
Here Twelf responds by instantiating $L$ to
$(cons~z~(cons~(s~z)~nil))$ and instantiating $M$ with an LF object (proof 
term) of type 
$append~(cons~z~nil)~(cons~(s~z)~nil)~(cons~z~(cons~(s~z)~nil))$.
Sometimes the logic program $\Gamma$ does not constrain a meta-variable, and
so it is left uninstantiated in the proof term.
Here, the interpretation is that
the proof term is valid for \emph{any} instantiation of the meta-variable by a 
term of the right type.

\section{A translation to predicate logic}
\label{sec:translation}

We now consider translating LF specifications into the logic of
higher-order hereditary Harrop formulas, also known as \hhf\ logic
\cite{miller91apal}. Intuitively, this logic is similar to Horn clause
logic, except that it allows predicates to have simply typed
$\lambda$-terms as arguments, it permits quantification over
(non-predicate) function variables and it includes universal
quantifiers and embedded implications in goals and the bodies of
clauses. Althought \hhf\ does not permit dependent types, it has been
shown that these dependencies can be systematically encoded by using
predicates over the simpler form of $\lambda$-terms.
This encoding is particularly interesting because it provides a way to
utilize efficient implementations of \hhf\ logic like the Teyjus
system \cite{teyjus.website} in realizing an LF-based logic
programming language. 

The general idea of the encoding, first proposed by 
Felty~\cite{felty89phd,felty90cade}, is to first encode LF objects
and types in a way that erases type dependencies, and to recover
those relationships in the encoding of LF type judgments.
One can then prove that the encoding is sound and complete by
providing a mappings between LF derivation and \hhf\ derivations
of the encoded judgments.
However, it is important to note that in the perspective of
using the encoding for logic programming, a stronger correctness
result is needed.
Instead of considering only closed LF assertions,
\ie\ checking a given type judgment,
we are interesting in finding objects inhabiting a particular type,
\ie\ searching for a derivation of an LF assertion
with a meta-variable for the object.
Therefore, our correctness result should also state that any instantiation
of that meta-variable is actually an LF encoding.

\begin{figure}
\begin{tabbing}
\qquad\qquad\qquad\=\qquad\=\kill
$hastype~z~nat$ \\
$\forall n.~ hastype~n~nat \supset hastype~(s~n)~nat$ \\
$hastype~nil~list$ \\
$\forall n.~
       hastype~n~nat \supset \forall l.~ \hastype~l~list \supset
       hastype~(cons~n~l)~list$ \\
$\forall l.~ hastype~l~list \supset
      hastype~(appNil~l)~(append~nil~l~l)$ \\
$\forall x.~ hastype~x~nat \supset
   \forall l.~ hastype~l~list \supset
   \forall k.~ hastype~k~list \supset$ \\
\>$\forall m.~ hastype~m~list \supset
     \forall a.~ hastype~a~(append~l~k~m) \supset$ \\
\>$hastype~(appCons~x~l~k~m~a)~(append~(cons~x~l)~k~(cons~x~m))$
\end{tabbing}
\caption{Simple translation of the LF specification for $append$}
\label{fig:simplified-append-translation}
\end{figure}

We shall only give an intuition and example of our translation,
referring the reader to \cite{snow10ms} for details and proofs.
Our translation proceeds in the same general fashion as Felty's: LF objects
and types are first encoded as \hhf\ terms.  Next the $hastype$ predicate
relates \hhf\ terms representing LF objects with \hhf\ terms representing
the LF types of those objects.  For instance, given an LF object $z$ of
type $nat$, we relate \hhf\ encodings $z'$ and $nat'$ thus: $hastype~z'~nat'$.
As an example,
the Twelf specification of $append$ translates into the clauses in
Figure~\ref{fig:simplified-append-translation}. From these clauses, we
can, for example, derive the goal $hastype~(cons~(s~z)~nil)~list$
and we could search for terms $X$ satisfying the following goal:
$$hastype~X~(append~(cons~z~nil)~(cons~(s~z)~nil) (cons~z~(cons~(s~z)~nil)))$$
Unfortunately, this program does not correspond exactly to the usual
$append$ logic program in \hhf. Specifically, whenever a goal
$hastype~p~(append~l~k~m)$ is proved, each list $l$, $k$, and $m$ is
``type-checked'' by deriving a proof of, for example, the subgoal
$hastype~l~list$.
This involves a recursion over the entire structure of the
list, and thereby introduces a quadratic complexity to the
fundamentally linear operation
of appending lists.

As we shall see, a meta-theoretical analysis of LF derivations
can be used to justify the removal of some of those typing constraints.
This study of derivations is best carried out directly in LF,
leading to interesting results in their own right, some of which
may be applicable beyond our translation problem.

\section{Redundancy in LF derivations}
\label{sec:redundancy}
The redundancy evoked above can be viewed from the LF standpoint alone.
Consider a variable $y$ of type $\typedpis{x}{B}{A}$ which might be
used to derive some judgment
$\lfprove{\Gamma}{\oftype{M}{A\subst{N_1/x_1\ldots N_n/x_n}}}$:
\[ \infer{\lfprove{\Gamma}{\oftype{M}{A\subst{N_1/x_1\ldots N_n/x_n}}}}{
      (\oftype{y}{\typedpis{x}{B}{A}})\in\Gamma &
      \lfprove{\Gamma}{\oftype{N_1}{B_1}} & \ldots &
      \lfprove{\Gamma}{\oftype{N_n}{B_n\subst{N_1/x_1\ldots N_{n-1}/x_{n-1}}}}
   }
\]
It is reasonable to assume that when checking that an object
has a particular type, or when searching for objects of a particular type,
the type has been checked to be valid first,
\ie\ that $\lfprove{\Gamma}{\oftype{A}{\type}}$ has a derivation.
It is often the case that some of the typing judgments 
$\lfprove{\Gamma}{\oftype{N_i}{B_i}}$ are superfluous in the sense
that they can be found almost verbatim in the derivation that
$A$ is a type.  Furthermore, it is possible to detect statically many of those cases,
just by examining the occurrences of $x_i$ in $A$.
The idea is that if $x_i$ occurs in $A$ in such a way that
$N_i$ will be found in $A\subst{N_1/x_1\ldots N_n/x_n}$, whatever the
other $N_j$ are, then the premise $\oftype{x_i}{N_i}$ can be
safely omitted.

Formally, we use the notion of a \ital{rigid occurrence} that is expressed
by the judgment $\termUV{\vec{x};\cdot; x_i}{B}$
defined in Figure~\ref{fig:ruvs} to
characterize some of these cases.

\begin{figure}
\[
   \AxiomC{$y_i$ distinct elements of $\delta$}
   \RightLabel{\ruvinito}
   \UnaryInfC{$\termUV{\Gamma; \delta; x}{x~\vec{y}}$}
 \DisplayProof
\quad\quad
   \AxiomC{$y \notin \Gamma$ and $\termUV{\Gamma; \delta; x}{M_i}$ for some $i$}
   \RightLabel{\ruvappo}
   \UnaryInfC{$\termUV{\Gamma; \delta; x}{y\ \vec{M}}$}
 \DisplayProof
\]
\[
   \AxiomC{$\termUV{\Gamma; \delta, y; x}{M}$}
   \RightLabel{\ruvabso}
   \UnaryInfC{$\termUV{\Gamma; \delta; x}{\typedlambda{y}{A}{M}}$}
 \DisplayProof
\]
  \caption{Rigidly occurring variables in LF objects}
  \label{fig:ruvs}
\end{figure}

\begin{theorem}
\label{theorem:rigid-variables}
  Let $\vec{N}$ be a vector of LF objects,
  $\vec{x}$ a vector of variables, and $\vec{B}$ of canonical LF types,
  all of same length.
  Let $\Gamma$ and $\Delta$ be LF contexts, $\delta$ be $dom(\Delta)$.
  Let $\Gamma_0 = x_1 : B_1, \ldots, x_n : B_n$.
  Let $\typedpis{x}{B}{A}$ be a canonical type, where $A$ is a base type.
  Suppose that there are derivation of: \begin{itemize}
  \item $\termUV{\vec{x};\delta; x_i}{M}$
  \item $\lfprove{\Gamma,\Gamma_0,\Delta}{
                     \oftype{M}{A}}$
  \item $\lfprove{\Gamma,\Delta\subst{\vec{N/x}}}{
                     \oftype{M\subst{\vec{N/x}}}{A\subst{\vec{N/x}}}}$
  \end{itemize}
  Then there is a derivation of
  $\lfprove{\Gamma}
           {\oftype{N_i}{B_i \subst{N_1 / x_1, \ldots, N_{i-1} / x_{i-1}}}}$.
\end{theorem}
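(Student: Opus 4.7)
The plan is to strengthen the statement so that $A$ may be any canonical type (not merely a base type) and to prove the strengthened claim by induction on the derivation of $\termUV{\vec{x};\delta;x_i}{M}$. The generalization is forced by the induction: in the \ruvappo case one must recurse into a subterm $M_k$ whose type need not be base, and once the hypothesis is loosened the \ruvabso case also becomes applicable, with $A$ of $\Pi$-shape. Throughout I rely on the fact that in canonical LF the last rule of a typing derivation is determined by the syntactic shape of $M$, so inversion is unambiguous.

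For \ruvabso, $M = \typedlambda{y}{A_0}{M_0}$ forces $A = \typedpi{y}{A_0}{A_1}$, and inverting \absobj on both typing premises yields derivations of $\oftype{M_0}{A_1}$ in $\Gamma,\Gamma_0,\Delta,\oftype{y}{A_0}$ and of $\oftype{M_0\subst{\vec{N}/\vec{x}}}{A_1\subst{\vec{N}/\vec{x}}}$ in $\Gamma,\Delta\subst{\vec{N}/\vec{x}},\oftype{y}{A_0\subst{\vec{N}/\vec{x}}}$; I then invoke the induction hypothesis with $\Delta$ extended by $\oftype{y}{A_0}$, so that $\delta$ grows by $y$ to match the new $dom$. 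For \ruvappo, $M = y\,\vec{M}$ with $y\notin\vec{x}$; since the substitution leaves the head $y$ fixed, both $M$ and $M\subst{\vec{N}/\vec{x}}$ are typed by backchaining on $y$ at corresponding argument positions, and inversion extracts paired typings of $M_k$ before and after substitution to which the induction hypothesis applies directly with $\Delta$ and $\delta$ unchanged.

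The central difficulty is the base case \ruvinito, where $M = x_i\,\vec{y}$ with distinct $y_j \in \delta = dom(\Delta)$. Inverting the backchaining derivation of $\oftype{M}{A}$ in $\Gamma,\Gamma_0,\Delta$ reveals that $B_i = \typedpis{z}{C}{D}$ with $|\vec{z}| = |\vec{y}|$, $D$ base, $A = D\subst{\vec{y}/\vec{z}}$, and each $y_j$ of type $C_j\subst{y_1/z_1,\ldots,y_{j-1}/z_{j-1}}$. Since the $y_j$ are not among $\vec{x}$, the substitution $\subst{\vec{N}/\vec{x}}$ leaves them fixed, so $M\subst{\vec{N}/\vec{x}}$ is the $\beta$-normal form of $N_i\,(y_1\cdots y_m)$. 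The existence of a canonical derivation for this term at a base type forces $N_i$ itself to be in canonical $\lambda$-form $\typedlambdas{z}{C'}{N''}$ with $\vec{C'} = \vec{C}\subst{N_1/x_1,\ldots,N_{i-1}/x_{i-1}}$, so that the normal form equals $N''\subst{\vec{y}/\vec{z}}$. The derivation witnessing this judgment at type $D\subst{N_1/x_1,\ldots,N_{i-1}/x_{i-1}}\subst{\vec{y}/\vec{z}}$ is then turned into a derivation of $\oftype{N''}{D\subst{N_1/x_1,\ldots,N_{i-1}/x_{i-1}}}$ in $\Gamma,\vec{z}{:}\vec{C'}$ by renaming the distinct fresh $\vec{y}$ back to $\vec{z}$ and strengthening away the part of $\Delta$ on which $N_i$ does not depend. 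Iterated use of \absobj assembles the required $\oftype{N_i}{B_i\subst{N_1/x_1,\ldots,N_{i-1}/x_{i-1}}}$ in $\Gamma$.

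The main obstacle lies precisely in this base case: justifying that the well-typedness of the hereditarily substituted term pins $N_i$ to the matching canonical $\lambda$-form (which relies on uniqueness of canonical forms and on the recursive shape of hereditary substitution), and establishing the ``un-renaming'' and strengthening steps that move the extracted derivation from $\Gamma,\Delta\subst{\vec{N}/\vec{x}}$ into $\Gamma,\vec{z}{:}\vec{C'}$. The distinctness condition built into \ruvinito is what makes the renaming $\vec{y}\leftrightarrow\vec{z}$ a genuine bijection and is therefore the key syntactic reason that rigidity suffices.
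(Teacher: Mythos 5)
Your proof follows essentially the same route as the paper's: induction on the rigidity derivation, walking the two LF typing derivations in parallel along the path that rigidity dictates, and at the \ruvinito{} base case inverting the renaming by the distinct bound variables $\vec{y}$ to recover the typing of $N_i$ and reassemble $\oftype{N_i}{B_i\subst{N_1/x_1,\ldots,N_{i-1}/x_{i-1}}}$. You make explicit several steps the paper elides (the need to generalize $A$ beyond base types for the induction to go through, the hereditary-substitution analysis pinning $N_i$ to canonical $\lambda$-form, and the strengthening away of $\Delta$), but the underlying argument is the same.
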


This theorem establishes a sort of substitution inversion:
having an abstract and an instantiated derivation,
we show that one can recover the derivation that was substituted,
that is
  $\lfprove{\Gamma}
           {\oftype{N_i}{B_i \subst{N_1 / x_1, \ldots, N_{i-1} / x_{i-1}}}}$.
Given the nature of that statement, it is not surprising that we find
in \ruvinito\ a condition reminiscent of higher-order patterns,
a fragment of higher-order unification where most general unifiers are
guaranteed, thanks to the ability to invert substitutions.

\begin{proof}[Theorem~\ref{theorem:rigid-variables}]
We proceed by induction on the rigidity derivation.
Walking simultaneously through the two LF derivations, following the
path given by the rigidity derivation, we eventually reach a point where we have
on the one hand a derivation of
$\oftype{x_i \ \vec{y}}{T \subst{y_1/z_1 \ldots y_k/z_k}}$
with $B_i = \typedpis{z}{C}{T}$,
and on the other a derivation of
$\oftype{N_i \ \vec{y}}{T \subst{y_1/z_1 \ldots y_k/z_k} \subst{N_1/x_1\ldots N_{i-1}/x_{i-1}}}$.
The bound variables $\vec{y}$ being distinct, the substitution 
$\subst{y_1/z_1\ldots y_k/z_k}$ is simply a renaming and can be inverted.
We obtain a derivation of
$\oftype{N_i \ \vec{z}}{T \subst{N_1/x_1\ldots N_{i-1}/x_{i-1}}}$ and
finally $\oftype{N_i}{B_i \subst{N_1/x_1\ldots N_{i-1}/x_{i-1}}}$.
\qed
\end{proof}

\begin{remark}
Note that it would be unsound to allow in \ruvinito\
any application $x \ \vec{N}$
rather than $x \ \vec{y}$ for distinct bound variables $\vec{y}$.
With such a rule the rigidity lemma the above theorem is no longer true.
For example, in a signature with
$\oftype{num}{nat \ra \type}$ and
$\oftype{num_n}{\typedpi{n}{nat}{(num \app n)}}$,
we obtain a counter-example with $M = \lambda x.~ x \ z$ and
$N = t$:
we have $\lfprove{\Gamma}{\oftype{(t \app z)}{(num \ z)}}$ and
\[\lfprove{\Gamma, \oftype{x}{(nat \ra num\ z)}}{\oftype{(x \ z)}{(num \ z)}}\]
but not $\lfprove{\Gamma}{\oftype{t}{nat \rightarrow num \app z}}$.
\end{remark}

\subsection{Application to proof search}
There are several ways to exploit this property about LF derivations,
and not just in the context of a translation, but in the more general
setting of proof search.
We come back to the problem of eliminating redundancies in the
rule corresponding to backchaining on some element of the LF context:
\[ \infer{\lfprove{\Gamma}{\oftype{M}{A\subst{N_1/x_1\ldots N_n/x_n}}}}{
      (\oftype{y}{\typedpis{x}{B}{A}})\in\Gamma &
      \lfprove{\Gamma}{\oftype{N_1}{B_1}} & \ldots &
      \lfprove{\Gamma}{\oftype{N_n}{B_n\subst{N_1/x_1\ldots N_{n-1}/x_{n-1}}}}
   }
\]
Eliminating redundancies here corresponds to limiting the number of redundant
subderivations investigated during search.

We first consider recognizing rigid occurrences of some variables $x_i$
in the target type $A$. We formalize this as $\formulaUV{\vec{x};x_i}{A}$,
defined by the following rules:
\[
    \AxiomC{$\termUV{\Gamma; \cdot; x}{M_i}$ for some $M_i$}
    \RightLabel{\ruvappt}
    \UnaryInfC{$\formulaUV{\Gamma; x}{c \vec{M}}$}
\DisplayProof
\quad
\quad
  \AxiomC{$\formulaUV{\Gamma, y; x}{B}$}
  \RightLabel{\ruvpit}
  \UnaryInfC{$\formulaUV{\Gamma; x}{\typedpi{y}{A}{B}}$}
\DisplayProof
\]

\begin{theorem} \label{theorem:inheritance}
  Let $\vec{N}$ be a vector of LF objects,
  $\vec{x}$ a vector of variables, and $\vec{B}$ of canonical LF types,
  all of same length.
  Let $\Gamma$ and $\Delta$ be LF contexts, $\delta$ be $dom(\Delta)$.
  Let $\Gamma_0 = x_1 : B_1, \ldots, x_n : B_n$.
  Let $\typedpis{x}{B}{A}$ be a canonical type, where $A$ is a base type.
  Suppose that there are derivation of: \begin{itemize}
  \item $\termUV{\vec{x}; x_i}{A}$
  \item $\lfprove{\Gamma,\Gamma_0,\Delta}{
                     \oftype{A}{\type}}$
  \item $\lfprove{\Gamma,\Delta\subst{\vec{N/x}}}{
                     \oftype{A\subst{\vec{N/x}}}{\type}}$
  \end{itemize}
  Then there is a derivation of
  $\lfprove{\Gamma}
           {\oftype{N_i}{B_i \subst{N_1 / x_1, \ldots, N_{i-1} / x_{i-1}}}}$.
\end{theorem}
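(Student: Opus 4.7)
The plan is to prove Theorem~\ref{theorem:inheritance} by induction on the derivation of $\formulaUV{\vec{x}; x_i}{A}$, descending simultaneously through the two LF type derivations guided by the rigidity derivation, until a base case is reached where Theorem~\ref{theorem:rigid-variables} can be invoked on an object occurring as an argument inside $A$.

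In the inductive case for \ruvpit, we have $A = \typedpi{y}{A'}{B}$ and the premise $\formulaUV{\vec{x},y; x_i}{B}$. Since $A$ is a canonical type, the derivation of $\lfprove{\Gamma,\Gamma_0,\Delta}{\oftype{A}{\type}}$ must end with the \pifam\ rule, giving a subderivation of $\lfprove{\Gamma,\Gamma_0,\Delta,\oftype{y}{A'}}{\oftype{B}{\type}}$; the same inversion on the instantiated derivation yields $\lfprove{\Gamma,\Delta\subst{\vec{N/x}},\oftype{y}{A'\subst{\vec{N/x}}}}{\oftype{B\subst{\vec{N/x}}}{\type}}$. Extending $\Delta$ by $\oftype{y}{A'}$ and invoking the inductive hypothesis on $B$ produces the required typing of $N_i$.

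In the base case for \ruvappt, we have $A = c \, \vec{M}$ together with $\termUV{\vec{x},\vec{y}; \cdot; x_i}{M_j}$ for some $j$, where $\vec{y}$ are the $\Pi$-bound variables accumulated along the preceding \ruvpit steps. Because $A$ is a base type in canonical form, the abstract type derivation must conclude with the big-step application rule for type families: it looks up some $\oftype{c}{\typedpis{z}{D}{\type}}$ in $\Gamma,\Gamma_0,\Delta$ and supplies subderivations $\lfprove{\Gamma,\Gamma_0,\Delta}{\oftype{M_k}{D_k\subst{M_1/z_1,\ldots,M_{k-1}/z_{k-1}}}}$ for each $k$. The instantiated derivation of $\oftype{A\subst{\vec{N/x}}}{\type}$ must use the same $c$ and, since substitution commutes with the indicated substitutions on the $M_k$'s, gives correspondingly instantiated typing premises. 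Taking $k = j$, we obtain precisely the abstract and instantiated typing judgments of $M_j$ required to invoke Theorem~\ref{theorem:rigid-variables}, and its conclusion is exactly the typing of $N_i$ asked for.

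The main obstacle will be reconciling the rigidity judgments. Theorem~\ref{theorem:rigid-variables} expects $\termUV{\vec{x}; \delta; x_i}{M_j}$ with $\delta = dom(\Delta')$ for the $\Delta'$ used in that theorem, but the derivation we have from \ruvpit/\ruvappt accumulates the $\Pi$-bound $\vec{y}$ in the first component and leaves the third empty, giving $\termUV{\vec{x},\vec{y}; \cdot; x_i}{M_j}$. I would therefore first establish a short structural lemma by induction on the rigidity rules showing that variables may be transferred from the first to the third component (since such a shift only relaxes the side conditions of \ruvinito\ and \ruvappo\ and commutes with \ruvabso); this then lets us invoke Theorem~\ref{theorem:rigid-variables} with $\Delta' = \Delta, \vec{y}{:}\vec{A'}$ and $\delta' = dom(\Delta) \cup \{\vec{y}\}$. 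The remaining bookkeeping — verifying that the inverted instantiated judgments are genuinely the $\vec{N/x}$-substitution of the abstract ones, despite the subsequent $\beta$-normalization inherent in the canonical rules — is routine because $\vec{x}$ is disjoint from the bound $\vec{y}$ and substitution commutes with the in-place substitutions of the big-step application rule.
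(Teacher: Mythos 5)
Your proposal follows essentially the same route as the paper's (very terse) proof: descend through the type structure along the rigidity derivation, inverting \pifam\ at each \ruvpit\ step, and at \ruvappt\ invoke Theorem~\ref{theorem:rigid-variables} on the rigid argument $M_j$ of the base type. The auxiliary lemma you identify for transferring the accumulated $\Pi$-bound variables from the first to the third component of the object-rigidity judgment is a real bookkeeping point that the paper's two-sentence proof silently elides, and your justification of it (it only relaxes the side conditions of \ruvinito\ and \ruvappo) is correct.
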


\begin{proof}
Similarly to Theorem~\ref{theorem:rigid-variables},
we walk through the type structure, following the path given by rigidity.
Eventually, we reach \ruvappt\ and invoke directly the previous theorem.
\end{proof}

From a practical viewpoint, this theorem allows us to statically analyze
an LF specification (which constitutes the initial LF context) and
discard some premises of the backchaining rules derived from that
specification, without losing soundness. This is currently done in our translation.

There are yet more redundancies in this same style.
We have used a rigid occurrence of some variable
$x_i$ in $A$ to retrieve a typing derivation for $N_i$ from
the derivation that $A \subst{N_1 / x_1, \ldots, N_{i-1} / x_{i-1}}$
is a valid type,
but we might also \ital{extend} the application of rigidity to retrieve
some information from the typing derivation for some $N_j$.
Given that we already have a derivation of
$\lfprove{\Gamma}{
     \oftype{\typedpi{x_1}{B_1}{\ldots\typedpi{x_n}{B_n}{A}}}{\type}}$,
we clearly have a derivation of
$\lfprove{\Gamma, \oftype{x_1}{B_1}, \ldots, \oftype{x_{j-1}}{B_{j-1}}}{
             \oftype{B_j}{\type}}$.
We also have a derivation of
$\lfprove{\Gamma}{
       \oftype{N_j}{B_j \subst{N_1 / x_1, \ldots, N_{j-1} / x_{j-1}}}}$,
either directly as one of the premises when $x_j$ is not rigid in $A$
or through Theorem~\ref{theorem:inheritance} when the corresponding premise
has been elided.
From this derivation we can also conclude that
$\lfprove{\Gamma}{
   \oftype{B_j \subst{N_1 / x_1, \ldots, N_{j-1} / x_{j-1}}}{\type}}$
has a derivation. We can hence finally apply Theorem~\ref{theorem:inheritance} to
these derivations to conclude that we do indeed have a derivation of 
$\lfprove{\Gamma}{
   \oftype{N_i}{B_i \subst{N_1 / x_1, \ldots, N_{i-1} / x_{i-1}}}}$.

\subsection{Related work}
Reed~\cite{reed08tcs} approaches the problem of eliminating redundancies in LF
from a different perspective, and with a different goal: that of reducing the \ital{size} of proof-terms
yielded during logic programming search,
motivated by the fact that in some applications
proof-terms must be transmitted or manipulated.  He does so by developing a
technique for identifying redundancies in terms, through a notion of \ital{strictness}
that is similar to rigidity, that he uses to identify
sub-terms of LF objects that can be reconstructed, either from the
types of nearby sub-terms, or from the type of an object itself.  
He describes two modes for omitting sub-terms, \ital{synthesis} based omission
and \ital{inheritance} based omission, and uses strictness to determine which
kind of omission, if any, is possible.
In omission by inheritance, knowledge of a term's type is used to elide
(and later reconstruct) type derivations for sub-terms.
For example, if $x+y$ is known to have type $nat$, then we automatically
known that $x$ has type $nat$, given that $+$ has type $nat \ra\ nat \ra\ nat$.
This is similar to what we described in Theorem~\ref{theorem:inheritance}.
In omission by synthesis, the types of nearby sub-terms are used to elide and
eventually reconstruct a given sub-term, when the sub-term being omitted appears
(in a sufficient manner) in said type.
For example, if $x=y$ is well-typed and $x$ has type $A$ we can deduce
that $y$ has type $A$ as well. This is similar to the additional application of
rigidity that we have described.

The main difference with Reed's work lies in the motivation.
Reeds work focuses on optimizing an LF
object (that is, a proof term) for size by eliminating redundant parts of the
object itself, and without particular concern for how such a term is discovered.
We are concerned with optimizing search, and we use the redundancy analysis
to avoid searching for parts of the typing derivation, but we still produce
a complete LF proof term.

\section{Optimizing the Twelf translation}
\label{sec:optimization}

We have presented a technique for identifying redundancies in LF derivations
and identified a few ways to use it in the context of proof-search.
Carrying these observations to the context of our translation to
\hhf---where we are also concerned with ensuring that all \hhf\ objects
discovered as instantiations of meta-variables actually correspond to
encodings of LF objects---is not entirely trivial.

\subsection{Meta-variables in objects}

Building on Theorem~\ref{theorem:inheritance}, we have developed in \cite{snow10ppdp}
an optimized translation from LF specifications to \hhf\ logic. The
part of this translation that removes redundant typing judgments is
based on the mapping on types presented in
Figure~\ref{fig:optimization}. The translation of context items of the
form $\oftype{x}{A}$ in an LF specification is given by
$(\encExtP{A}{\langle\rangle}\app x)$, where $\langle \rangle$ denotes
an empty sequence of variables; this operation is lifted to LF specifications by
distribution to each item in the specification.  The translation of a type $A$ for
which an inhabitant $M$ is sought is correspondingly given by
$(\encExtN{A}\app \langle M \rangle)$. 
Notice that these translations are guided solely by the type $A$; this is highlighted by the fact that
the translation actually returns a formula \ital{abstracted} over the proof-term.
This translation is illustrated by its application to the example
Twelf specification considered in Section~\ref{sec:lf} that yields the
clauses shown in Figure~\ref{fig:optimized-append}, which
should be contrasted with the ones in
Figure~\ref{fig:simplified-append-translation}.

\begin{figure*}
  \centering
  \begin{align*}
    \encExtP{\typedpi{x}{A}{B}}{\Gamma} :=&\
      \begin{cases}
        \lambda M.~ \forall x.~ \top \supset \encExtP{B}{\Gamma, x}(M \app x)
          & \text{if}\ \formulaUV{\Gamma; x}{B} \\
        \lambda M.~ \forall x.~
            \encExtN{A}(x) \supset \encExtP{B}{\Gamma, x}(M \app x)
          & \text{otherwise}
      \end{cases} \\
    \encExtP{N}{\Gamma} :=&\ \lambda M.~ hastype~M \app
    \encTerm{N}\qquad\text{if}\ N\ \text{is a base type}
    \\[8pt]
    \encExtN{\typedpi{x}{A}{B}} :=&\
      \lambda M.~ \forall x.~
         \encExtP{A}{\cdot}(x) \supset \encExtN{B}(M \app x) \\
    \encExtN{N} :=&\ \lambda M.~hastype\app M \app \encTerm{N}
    \qquad\text{if}\ N\ \text{is a base type}
  \end{align*}
  \caption{Optimized translation of LF specifications and judgments to \hhf}
  \label{fig:optimization}
\end{figure*}

\begin{figure*}
\begin{tabbing}
\qquad\qquad\qquad\=\qquad\=\kill
$hastype~z~nat$, $\forall n.~ hastype~n~nat \supset hastype~(s~n)~nat$, \\
$hastype~nil~list$, $\forall n.~ hastype~n~nat \supset 
   \forall l.~ hastype~l~list \supset hastype~(cons~n~l)~list$, \\
$\forall l.~ \top \supset
      hastype~(appNil~l)~(append~nil~l~l)$, \\
$\forall x.~ \top \supset
   \forall l.~ \top \supset
   \forall k.~ \top \supset
   \forall m.~ \top \supset
   \forall a.~ hastype~a~(append~l~k~m) \supset$ \\
\>$hastype~(appCons~x~l~k~m~a)~(append~(cons~x~l)~k~(cons~x~m))$
\end{tabbing}
\caption{Optimized translation of the LF specification for $append$}
\label{fig:optimized-append}
\end{figure*}

We have proved the optimized translation correct. The statement of its
correctness is slightly complicated by the fact that it requires that all 
\hhf\ terms correspond to LF expressions, so that we can use the translation to
generate actual LF proof terms.

\begin{theorem}[Optimized translation correctness]
  \label{theorem:optimization-correctness}
  Let $\Gamma$ be an LF specification such that
  $\lfprove{}{\Gamma\ \ctx}$ has a
  derivation, $A$ an LF type such that
  $\lfprove{\Gamma}{\oftype{A}{\type}}$ has a derivation.
  Then, for any LF object $M$ such that
  $\lfprove{\Gamma}{\oftype{M}{A}}$ has a derivation,
  $\iprove{\encExtP{\Gamma}{}}{\encExtN{\oftype{M}{A}}}$ is derivable.
  Moreover, if
  $\iprove{\encExtP{\Gamma}{}}{\encExtN{A}(M)}$ for an arbitrary \hhf\ term $M$,
  then it must be that $M = \encTerm{M'}$ for some canonical LF object
  such that $\lfprove{\Gamma}{\oftype{M'}{A}}$ has a derivation.
\end{theorem}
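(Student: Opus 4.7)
The plan is to prove both halves by induction, strengthening the statement so that it closes under open terms and under arbitrary well-formed contexts. For soundness I would induct on a canonical LF derivation of $\lfprove{\Gamma}{\oftype{M}{A}}$, generalized to assert that any such derivation entails $\iprove{\encExtP{\Gamma}{}}{\encExtN{A}(\encTerm{M})}$, with analogous claims for the type and kind layers. Routine substitution lemmas, asserting that $\encTerm{A\subst{\vec{N}/\vec{x}}} = \encTerm{A}\subst{\vec{\encTerm{N}}/\vec{x}}$ and that $\encExtN{\cdot}$ and $\encExtP{\cdot}{\cdot}$ commute with substitution, are invoked wherever the typing rules apply substitutions.

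For soundness the interesting cases are abstraction and backchaining. In the \absobj\ case the encoded goal unfolds to a universal-implication that is closed by \allGoal\ and \impGoal, extending the hohh\ context with $\encExtN{A}(x)$ in lockstep with the LF extension by $\oftype{x}{A}$, after which the induction hypothesis discharges the body. In the backchaining case, we decide on the clause produced from $\oftype{y}{\typedpis{x}{B}{A}}$: for each $i$ with $\formulaUV{\vec{x}; x_i}{A}$, the corresponding premise has been replaced by $\top$ and is immediately closed by \topGoal, while for the remaining positions the induction hypothesis yields proofs of $\encExtN{B_i\subst{N_1/x_1\ldots N_{i-1}/x_{i-1}}}(\encTerm{N_i})$.

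For the adequacy direction I would proceed by induction on an hohh\ uniform-proof derivation of $\iprove{\encExtP{\Gamma}{}}{\encExtN{A}(M)}$, exploiting the fact that uniform proof search is driven by the shape of $A$. A $\Pi$-type forces the goal to be decomposed by \allGoal\ and \impGoal, which in turn forces $M$ to be an abstraction whose body is handled by the induction hypothesis. At an atomic $hastype$ goal the proof must backchain on a clause obtained by translating some $\oftype{y}{\typedpis{x}{B}{A}}$ in $\Gamma$, and this pattern-matches $M$ against an application $y\app\encTerm{N_1}\app\cdots\app\encTerm{N_n}$ while leaving subgoals for exactly the non-elided premises. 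The induction hypothesis, applied to those subgoals, recovers LF typing derivations for the non-elided $N_j$'s. For each elided $N_i$ the missing derivation is rebuilt by invoking Theorem~\ref{theorem:inheritance}: the assumption $\lfprove{\Gamma}{\oftype{A}{\type}}$, together with the well-formedness of the instantiated type deduced from the recovered premises, supplies the three ingredients the theorem requires.

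The hard part will be the adequacy direction. The optimized clauses are strictly weaker than the naive ones, so the claim cannot be inherited from any prior correctness result; one must show that the slack introduced by $\top$ premises is precisely compensated by the rigidity guarantee encoded in $\formulaUV{\vec{x}; x_i}{A}$. This demands a carefully strengthened induction hypothesis that carries along well-formedness of the LF type being queried, so that at each backchaining step the three derivations required by Theorem~\ref{theorem:inheritance} are in hand, plus an inversion argument confirming that every hohh\ witness conforms to the skeleton produced by $\encTerm{\cdot}$ and therefore reassembles into a canonical LF object $M'$ with $\lfprove{\Gamma}{\oftype{M'}{A}}$.
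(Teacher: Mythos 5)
Your proposal takes essentially the same approach as the paper: the paper gives only a brief sketch (deferring details to the cited thesis), stating precisely that the proof relies on Theorem~\ref{theorem:inheritance} to recover the typing judgments that were optimized away and additionally shows that every \hhf\ witness must be the encoding of a well-formed LF object. Your two inductions --- on the LF derivation for soundness and on the uniform \hhf\ proof for adequacy, carrying along well-formedness of the queried type so that the three hypotheses of Theorem~\ref{theorem:inheritance} are available at each backchaining step --- are exactly that structure.
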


The proof of the following relies on Theorem~\ref{theorem:inheritance}
to recover typing judgments that have been optimized away.
In addition, it shows that \hhf\ terms must be well-formed LF objects.
Note that this theorem implies that proof-search for encoded
LF typing judgments will always fully instantiate the meta-variable
corresponding to the object
| otherwise, a dummy instantiation of that variable
  would still yield a valid derivation invalidating our theorem.

Unfortunately, we have not been able to exploit the extended
redundancy analysis to further optimize our translation; it has proven
difficult to ensure that \hhf\  
meta-variables are instantiated by LF encodings while still maintaining
an efficient translation.  This is due to the fact that, in eliminating
redundancies in this fashion, we must eventually obtain a typing derivation in
a setting without these optimizations, which could reduce or even destroy the
effectiveness of such eliminations.

\subsection{Meta-variables in types}

Note that, while we have proved Theorem~\ref{theorem:optimization-correctness}
for closed LF types, we have not yet considered the meaning
of meta-variables in such types, what it means when a meta-variable is not
bound during search, nor whether bindings for them are
correct.  Here there are two approaches.

Recall the interpretation of remaining meta-variables after proof-search,
in both \lprolog\ and Twelf: the goal actually holds
for \emph{any} term $t$ of the right type.
In particular, upon successful \lprolog\ search for an encoded LF query,
remaining meta-variables in the type can be instantiated by any encoding of an
LF object.
This can be done after the main proof search, by searching for an inhabitant of
the corresponding type.
We can then extend this treatment even to meta-variables that \ital{are} bound
during search, by simply checking \ital{after} search succeeds that
the meta-variables have been properly instantiated.
Once we have checked that the initial type has been instantiated into
a closed valid type in that way,
Theorem~\ref{theorem:optimization-correctness} applies.
In practice, this process is less intensive
than proof search proper, and tends not to be overly expensive.

Going further, it should in fact be the case that, under our translations,
no meta-variable could possibly be bound to the encoding of
an LF term of incorrect type or to something that is not even an encoding.
\ignore{In the example above, \lstinline!B! was bound
to the term \lstinline!void!, which clearly has the correct LF type $tm~unit$.}
The intuition here is that the only time a meta-variable is bound in the
logic programs generated by the translation is when it is matched with the head
of a clause.  Since the original specification is valid,
any such matching clause
should impose only the correct type on the meta-variable.
However, the statement and proof of this theorem is not at all obvious, and is
further stymied by the fact that it isn't clear how exactly this extension to
Twelf, which we are seeking to emulate, should behave.

\section{Conclusion and Future Work}
\label{sec:conclusion}

We have considered in this paper a translation from specifications in
the dependently typed $\lambda$-calculus LF to a predicate logic
over simply typed $\lambda$-terms. This translation is motivated by a
desire to utilize implementations of proof search in the latter logic
to realize LF-based proof search. A key task in making such a
translation effective is that of identifying and, subsequently,
eliminating redundancies in LF expressions and
derivations. Specifically, we have described a property of bound
variables in types that makes it unnecessary to type-check their
instantiations in ensuring that expressions that use such types are
well-formed. We note that our proof of such redundancy is based 
directly on the properties of LF expressions 
and derivations. Thus, our observation is of larger interest than just
the translation task at hand. 

The work described here can be extended in at least two ways. First,
it should be possible to enhance our techniques for identifying
redundancies. We have presented one such extension already through a
more inclusive definition of the rigidity property. 
However LF derivations contain significant redundancies and we
believe it is possible to carry out a richer analysis towards identifying
these based on syntactic properties. 
Second, we can think of applying the specific techniques developed for
detecting such redundancies to contexts different from translation.
We have already discussed the relationship between our work and that
of Reed. An understanding of the differences between our
system and his could eventually lead to a better, and provably
correct, ability to shorten LF proof terms that are needed in
applications such as that of proof-carrying code \cite{necula97popl}.
Moreover the usefulness of these ideas need not be limited to 
translation and compact representation of LF expressions: any
application of LF that requires type-checking, such as automatic
meta-theorem proving, could benefit from methods for discovering
repetitive type information.

\section{Acknowledgements}
This work has been supported by the NSF grant CCF-0917140. Opinions,
findings, and conclusions or recommendations  expressed in this paper
are those of the authors and do not necessarily reflect the views of
the National Science Foundation.

\bibliographystyle{alpha}
\newcommand{\etalchar}[1]{$^{#1}$}

\end{document}